\newcommand\blfootnote[1]{%
  \begingroup
  \renewcommand\thefootnote{}\footnote{#1}%
  \addtocounter{footnote}{-1}%
  \endgroup
}
\begin{document}
\title{A Multi-agent Market Model Can Explain the Impact of AI Traders in Financial Markets -- A New Microfoundations of GARCH model}
\titlerunning{A New Microfoundations of GARCH model}
%
\author{Kei Nakagawa\inst{1}\orcidID{0000-0001-5046-8128} \and
Masanori Hirano\inst{2}\orcidID{0000-0001-5883-8250} \and
Kentaro Minami\inst{3} \ and
Takanobu Mizuta\inst{4}\orcidID{0000-0003-4329-0645}}
\authorrunning{K. Nakagawa et al.}
%
\institute{Nomura Asset Management Co,Ltd., Japan \email{kei.nak.0315@gmail.com} \and Preferred Netqorks, Inc., Japan \email{research@mhirano.jp}\and PayPay Corporation, Japan \email{research@ktrmnm.jp}\and SPARX Asset Management Co., Ltd., Japan \email{mizutata@gmail.com}}

\maketitle              
\begin{abstract}
The AI traders in financial markets have sparked significant interest in their effects on price formation mechanisms and market volatility, raising important questions for market stability and regulation. Despite this interest, a comprehensive model to quantitatively assess the specific impacts of AI traders remains undeveloped. This study aims to address this gap by modeling the influence of AI traders on market price formation and volatility within a multi-agent framework, leveraging the concept of microfoundations.
Microfoundations involve understanding macroeconomic phenomena, such as market price formation, through the decision-making and interactions of individual economic agents. While widely acknowledged in macroeconomics, microfoundational approaches remain unexplored in empirical finance, particularly for models like the GARCH model, which captures key financial statistical properties such as volatility clustering and fat tails. 
This study proposes a multi-agent market model to derive the microfoundations of the GARCH model, incorporating three types of agents: noise traders, fundamental traders, and AI traders. By mathematically aggregating the micro-structure of these agents, we establish the microfoundations of the GARCH model. We validate this model through multi-agent simulations, confirming its ability to reproduce the stylized facts of financial markets. Finally, we analyze the impact of AI traders using parameters derived from these microfoundations, contributing to a deeper understanding of their role in market dynamics.
\blfootnote{This paper does not reflect the view of the organizations the authors belong to. All errors in this paper are the responsibility of the authors.}
\keywords{AI Trader  \and GARCH model \and microfundation \and financial market \and simulation}
\end{abstract}
\section{Introduction}
The advancement of various machine learning algorithms has markedly increased the presence of AI traders in financial markets, drawing significant attention to their potential impact on these markets. Understanding how AI traders influence price formation mechanisms and market volatility is crucial for maintaining market stability and informing regulatory perspectives~\cite{lee2020access,leitner2024rise}. 
However, a comprehensive model to quantitatively capture the specific impacts of AI traders has not yet been fully established. In this study, we aim to model the effects of the presence and decision-making processes of AI traders on market price formation mechanisms and volatility, utilizing a multi-agent modeling framework and leveraging the concept of microfoundations.

The effort to understand macroeconomic phenomena, such as market price formation, through the decision-making and interactions of economic agents, is known as microfoundations. 
In macroeconomics, the significance of microfoundations has been widely recognized, leading to the development of numerous macroeconomic models incorporating microfoundations \cite{janssen2005microfoundations,hoover2015reductionism}. 
Conversely, attempts to provide microfoundations for price fluctuation models in empirical finance are relatively sparse. 
For instance, the GARCH (Generalized AutoRegressive Conditional Heteroscedasticity) model, an extension of the ARCH model~\cite{engle1982autoregressive}, is extensively used in finance to model conditional variance in various empirical studies~\cite{nakagawa2018risk,francq2019garch}. 
This is because the GARCH model effectively reproduces typical financial statistical properties, known as stylized facts~\cite{cont2001empirical}, such as volatility clustering\footnote{This refers to the phenomenon where large~(resp. small) changes in asset prices are often followed by large~(resp. small) changes, indicating that volatility tends to cluster together over time. This results in periods of high volatility and periods of low volatility within financial markets.} and fat tails\footnote{This refers to the statistical property of a probability distribution where the tails (extremes) are fatter than those of a normal distribution. Fat tails are often quantified using kurtosis, a measure of the "tailedness" of the distribution. Distributions with high kurtosis exhibit fat tails, indicating a greater likelihood of observing values far from the mean.}. 
While \cite{mizuta2016micro} provides microfoundations for the ARCH model, the microfoundations for the GARCH model remain underexplored. Specifically, it is unclear how the parameters of the GARCH model arise from the micro-level decision-making processes or the proportions of different investors.
The validity of multi-agent modeling is often evaluated based on its ability to reproduce these stylized facts \cite{chen2012agent}.

Our objective is to construct the GARCH model through microfoundations and ensure that our model can reproduce the stylized facts of financial markets represented by the GARCH model.
The validity of multi-agent modeling is evaluated based on whether they can reproduce the stylized facts\cite{chen2012agent}. 
We aim to construct the GARCH model throgh microfoundations and ensure that our model can reproduce the stylized facts of financial markets represented by the GARCH model.

Based on the above, we propose a multi-agent market model that can analyze the impact of AI traders on financial markets by deriving the microfoundations of the GARCH model. 
Our multi-agent market model includes three types of agents: noise traders, fundamental traders, and AI traders. 
By mathematically solving the aggregation of the micro-structure of the multi-agent market model, we demonstrate the microfoundations of the GARCH(1,1) model. 
Furthermore, through multi-agent simulations, we validate the multi-agent market model and confirm the replication of market stylized facts. 
Finally, we discuss the impact of AI traders using parameters derived from the microfoundations.

\section{Related Work}
Research on the impact of traders on financial markets has increased significantly with the rapid expansion of new technologies such as algorithmic trading and AI. 
Various approaches have been employed to evaluate the influence of traders on the market. We review key prior studies on the impact of traders, their analysis using multi-agent models, and the perspective of microfoundations, highlighting the positioning of this study.

This study aims to bridge the gap in understanding the impacts of AI traders on market price formation and volatility by leveraging a multi-agent modeling approach grounded in microfoundations. Previous research has extensively explored high-frequency trading and various trader types, but the specific quantitative impacts of AI traders remain underexplored. Our work addresses this gap by developing a novel multi-agent market model that includes noise traders, fundamental traders, and AI traders, capturing their decision-making processes and interactions to analyze their collective influence on financial markets.

We develop the concept of microfoundations to the GARCH(1,1) model, demonstrating how its parameters can arise from the micro-level behaviors of different traders. This bridges a crucial gap in empirical finance, where the microfoundations of the GARCH model have been underexplored. 
By providing new insights into how AI traders affect market volatility and price formation, our research offers valuable information for market stability and regulatory frameworks. This study advances the theoretical understanding of AI traders' impact and addresses the existing gap in empirical finance, offering practical implications for market participants and policymakers.

\subsection{Impact of Traders on the Market}
The initial research on the impact of traders primarily focused on high-frequency trading~(HFT). 
Hendershott {\it et al.} \cite{hendershott2011does} demonstrated that HFT improves market liquidity and reduces trading costs. However, it has also been suggested that HFT could increase market volatility \cite{brogaard2010high}. These studies imply that the impact of traders on the market is not straightforward.

Other research has explored the broader impact of various types of traders. 
For instance, Kirilenko {\it et al.} \cite{kirilenko2017flash} analyzed the role of different trader types in the flash crash of 2010, showing how their interactions contributed to extreme market events. These studies highlight the complex dynamics and potential systemic risks posed by different trading strategies.

\subsection{Multi-Agent Models for Market Analysis}
To analyze the impact of traders, multi-agent models have been widely used. 
Lebaron {\it et al.} \cite{lebaron2001evolution} studied the effect of interactions among agents with different trading strategies on market dynamics using an artificial market. This research demonstrated the utility of agent-based models in capturing the emergent properties of financial markets.
Chen {\it et al.} \cite{chen2012agent} employed an agent-based model to evaluate the influence of algorithmic traders on market price formation and volatility. Their work showed how the inclusion of algorithmic traders can lead to different market behaviors and identified key factors that influence these outcomes. These studies provide a foundational understanding of how agent-based modeling can be applied to financial markets to assess the impact of traders.

\subsection{Microfoundations Perspective}
Farmer {\it et al.} \cite{farmer2002price} provided a microfoundational approach to price formation by modeling the behavior of heterogeneous agents and their impact on market prices. Similarly, Lux {\it et al.} \cite{lux1999scaling} developed a model that captured the scaling laws observed in financial markets by incorporating the interactions of traders with different strategies. 
They argued that heterogeneous agents and their interactions are necessary to replicate stylized facts through a multi-agent simulation study.
Because they utilized simulation rather than mathematical analysis, their methods are a bit different from those of the microfoundation, but their motivation is very similar.
These studies illustrate the potential of microfoundations to explain observed market phenomena through the lens of individual agent behavior.

\section{Multi-agent Market Model}
In this section, we define the decision-making model for each trader and the price determination process used in our multi-agent market model.

\subsection{Return Model}
The return model $r_t$ we use in this study is GARCH(1,1) \cite{bollerslev1986generalized}, which is widely used in financial markets analysis and defined as: 
\begin{align}
    & r_t = f(x_{t-1}, I_{t-1}) + u_t \\
    & u_t = \sigma_t \varepsilon_t,~ \varepsilon_t \sim N(0,1) \\
    & \sigma_t^2 = \omega + \alpha u_{t-1}^2 + \beta \sigma_{t-1}^2 \label{GARCH}
\end{align}
where $r_t$ is the return at time $t$, $x_{t-1}$ and $I_{t-1}$ are the fundamental variable and the set of information (e.g., historical return, sentiment indicators) available for prediction up to time $t-1$.

$f$ is an appropriate function for explaining future returns based on fundamental values and exogenous information.

This setting is consistent with the semi-strong form of efficient market hypothesis~\cite{fama1970efficient}, i.e., the assumption that all publicly available information is taken into account in the asset return.
$u_t = r_t - f(x_{t-1}, I_{t-1})$ is the residual term.
Furthermore, $\omega >0$ and $\alpha + \beta < 1$ are necessary for satisfying the stationarity of the GARCH model.
$\omega$ and $\alpha$ mean the constant volatility and the sensitivity to past return shocks, respectively.
$\beta$ is the dependence on past volatility,  which is necessary to replicate volatility clustering.

The expected returns 
$\mathbb{E}_{t-1}[r_t]$ and the variance $\mathbb{V}_{t-1}[r_t]$ are calculated as:
\begin{align}
    &\mathbb{E}_{t-1}[r_t] = \mathbb{E}_{t-1}[f(x_{t-1}, I_{t-1})] + \mathbb{E}_{t-1}[u_t] = f(x_{t-1}, I_{t-1})\label{GARCH_EXP} \\
    &\mathbb{V}_{t-1}[r_t]:=\mathbb{E}_{t-1}[(r_t - \mathbb{E}_{t-1}[r_t])^2] = \omega + \alpha u_{t-1}^2 + \beta \sigma_{t-1}^2 \label{GARCH_VAR}.
\end{align}

Hereafter, $\mathbb{E}_{t-1}[\cdot]$ is the conditional expectation and $\mathbb{V}_{t-1}[\cdot]$ is the conditional variance up to time $t-1$.

\subsection{Trader Model}
We assume the existence of three types of traders in our market model:
\begin{description}
    \item[1. Noise traders] who provide liquidity,
    \item[2. Fundamental traders] who make decisions based on fundamental variables,
    \item[3. AI traders] who make decisions based on the model predictions learned from past market time series.
\end{description}

\subsubsection{Noise Traders}
Noise traders are based on the existence of traders who generate orders probabilistically due to exogenous reasons \cite{peress2020glued}.
These exogenous reasons include investment adjustments over individual traders' life cycles, portfolio rebalancing for hedging purposes, and passive investment strategies that follow indices, such as indices weighted by market capitalization.

We assume that noise traders place orders uniformly, irrespective of the fundamental prices. Consequently, noise traders play the role of providing liquidity.

\subsubsection{Fundamental Traders}
Next, we assume the existence of risk-averse fundamental traders who make decisions based on the fundamental variable $x_{t-1}$ at the previous time step $t-1$.
Specifically, we assume the utility function $U(\cdot)$ of the returns for the fundamental traders $r_{F,t}$ at the time step $t$.
These traders are risk-averse investors, meaning that their utility function satisfies $U' > 0$ and $U'' < 0$.

Here, we employ the utility function $U$ depending on the mean and standard deviation (not variance) of their returns $r_{F,t}$.
Furthermore, we assume that:
\begin{align}
	\mathbb{E}_{t-1}[r_{F,t}] &= g(x_{t-1})\\
	\mathbb{V}_{t-1}[r_{F,t}] &= \sigma_{t-1}^2
\end{align}
where $g$ is a monotonic function and $\sigma_{t-1}$ is defined previous section. 
That means, when fundamental traders obtain information at time $t-1$, they generally suppose a monotonically expected return on the fundamental variable, and the variance of this return is $\sigma_{t-1}^2$ available at time $t-1$.
Thus, the expected utility $U$ of the fundamental traders can be written as follows:
\begin{align}\label{U_Funda_trader}
    \mathbb{E}_{t-1}[U(r_{F,t})] = \mathbb{E}_{t-1}[r_{F,t}] - \lambda \sqrt{\mathbb{V}_{t-1}[r_{F,t}]} = g(x_{t-1}) - \lambda \sigma_{t-1}
\end{align}
where the risk aversion coefficient is $\lambda > 0$.

In the field of finance, the risk-averse expected utility is generally described by the mean and variance\cite{markowitz2000mean}. 
Note that for fundamental traders in this study, utility function is based on the mean and standard deviation. 
In general, the use of the mean and standard deviation instead of the mean and variance for the risk-averse utility function can be justified under certain conditions \cite{tobin1965theory}.
Specifically, (1) if the utility function is a quadratic function of $r$, or (2) if the probability distribution of $r$ is normal. However, these are quite strong assumptions. In particular, since the aim of this study is to reproduce stylized facts, normality cannot be required.
Therefore, in this paper, we assume the following, which is a more relaxed condition that includes the normal distribution:
The standardized probability distribution of $r_{F,t}$ is independent of the values of the expected value and the standard deviation.
The lemma \ref{ref:lemma} justifies this condition.

\begin{lemma}\label{ref:lemma}
If the standardized probability distribution of the random variable $r$ is independent of the values of the expected value $\mu$ and the standard deviation $\sigma$, i.e., if the probability distribution $\phi(\epsilon)$ of $\epsilon = (r - \mu) / \sigma $ does not depend on $\mu$ and $ \sigma $, then an increase in the standard deviation results in a decrease in the expected utility (allowing the use of the standard deviation as a risk measure).
\end{lemma}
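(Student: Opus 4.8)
The hypothesis says exactly that we may write $r = \mu + \sigma\,\epsilon$ where $\epsilon$ has a fixed law with density $\phi$, mean $\mathbb{E}[\epsilon]=0$ and variance $\mathbb{E}[\epsilon^2]=1$, and crucially the law of $\epsilon$ does not vary with $\mu$ or $\sigma$. Hence the expected utility is a genuine function of the two parameters alone,
\begin{align}
V(\mu,\sigma) := \mathbb{E}_{t-1}[U(r)] = \int U(\mu + \sigma\,\epsilon)\,\phi(\epsilon)\,d\epsilon ,
\end{align}
and ``an increase in the standard deviation decreases the expected utility'' is precisely the claim that $\sigma \mapsto V(\mu,\sigma)$ is (strictly) decreasing on $(0,\infty)$ for each fixed $\mu$. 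The plan is to establish this monotonicity directly, using only the marginal conditions $U'>0$, $U''<0$ together with $\mathbb{E}[\epsilon]=0$; the mean itself will play no role, which is why the conclusion is about $\sigma$ alone.

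Fix $\mu$ and take $0 < \sigma < \sigma'$. For each value of $\epsilon$, the mean value theorem supplies a point $\xi_\epsilon$ strictly between $\mu+\sigma\epsilon$ and $\mu+\sigma'\epsilon$ with
\begin{align}
U(\mu+\sigma'\epsilon) - U(\mu+\sigma\epsilon) = (\sigma'-\sigma)\,\epsilon\,U'(\xi_\epsilon) .
\end{align}
The key step is a single sign comparison exploiting concavity: since $U''<0$, $U'$ is strictly decreasing, so for $\epsilon>0$ we have $\xi_\epsilon>\mu$ and hence $\epsilon\,U'(\xi_\epsilon) < \epsilon\,U'(\mu)$, while for $\epsilon<0$ we have $\xi_\epsilon<\mu$, giving $U'(\xi_\epsilon)>U'(\mu)$, and multiplying by the negative $\epsilon$ again yields $\epsilon\,U'(\xi_\epsilon) < \epsilon\,U'(\mu)$. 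Thus $\epsilon\,U'(\xi_\epsilon) \le \epsilon\,U'(\mu)$ pointwise, strictly whenever $\epsilon\neq0$. Integrating against $\phi$ and using $\mathbb{E}[\epsilon]=0$,
\begin{align}
V(\mu,\sigma') - V(\mu,\sigma) = (\sigma'-\sigma)\,\mathbb{E}\!\left[\epsilon\,U'(\xi_\epsilon)\right] \le (\sigma'-\sigma)\,U'(\mu)\,\mathbb{E}[\epsilon] = 0 ,
\end{align}
with strict inequality as soon as $\epsilon$ is nondegenerate (i.e. $\phi$ is not a point mass), which holds here. One can equivalently phrase the core step as the correlation inequality $\mathrm{Cov}\!\left(\epsilon,\, U'(\mu+\sigma\epsilon)\right)\le 0$, valid because $\epsilon\mapsto U'(\mu+\sigma\epsilon)$ is nonincreasing; differentiating under the integral sign to get $\partial_\sigma V = \mathbb{E}[\epsilon\,U'(\mu+\sigma\epsilon)]$ leads to the same conclusion.

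I expect the only delicate point to be the measure-theoretic bookkeeping around the pointwise identity: one must choose $\xi_\epsilon$ measurably in $\epsilon$ and ensure $\epsilon\mapsto\epsilon\,U'(\xi_\epsilon)$ is integrable so that the interchange of expectation with the mean-value expression (or the differentiation under the integral) is legitimate. This is immediate if $U'$ is bounded on the relevant range, and more generally follows under the standing assumption that $\mathbb{E}_{t-1}[U(\mu+\sigma\epsilon)]$ is finite for the $\sigma$ in question. All the conceptual content, however, lies in the reparametrization licensed by the hypothesis plus the one sign comparison above; everything else is routine.
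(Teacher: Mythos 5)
Your proof is correct and rests on exactly the same key idea as the paper's: comparing $U'$ at the shifted point against the constant $c=U'(\mu)$, using concavity to get the pointwise inequality $\epsilon\,U'(\cdot)\le \epsilon\,U'(\mu)$, and then invoking $\mathbb{E}[\epsilon]=0$. The paper carries this out on the derivative $\partial_\sigma\mathbb{E}[U]=\int U'(\mu+\sigma\epsilon)\,\epsilon\,\phi(\epsilon)\,d\epsilon$ — precisely the ``equivalent phrasing'' you mention at the end — while your mean-value-theorem packaging of the finite difference is a minor variant that incidentally sidesteps the justification of differentiating under the integral sign.
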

\begin{proof}
    The proof is given in the Appendix.
\end{proof}

\subsubsection{AI Traders}
AI traders are assumed to train each own prediction model $h$ that effectively explains the past return series $r_{t}, t=1, \cdots, T$.
We assumed that they also have a utility $M(\cdot)$ based on the model.

That is, under a certain loss function $L$, the model $h$ is trained by minimizing the following:
\begin{align}
    \min_{h} \sum_{t} L(r_{t},h(x_{t-1}, I_{t-1}))
\end{align}
where $I_{t-1}$ is the set of information available for prediction up to time $t-1$ as the same as that in the previous section.

In this study, we use the squared loss function $L(x,y) = (x-y)^2$, which is commonly used in regression tasks.
Therefore, we minimize:
\begin{align}
    \min_{h} \sum_{t} (r_{t}-h(x_{t-1}, I_{t-1}))^2
\end{align}
to train the model $h$.

Next, let the utility function of the returns $r_{A,t}$ for the AI traders be denoted by $M(\cdot)$.
These traders are assumed to be risk-averse investors, meaning that $M' > 0$ and $M'' < 0$.
Similar to fundamental traders, we assume that the $M$ of $r_{A,t}$ is determined by its mean and standard deviation.
Furthermore, we assume that $r_{A,t}$ is given by the model $h$ as follows:
\begin{align}
    &\mathbb{E}_{t-1}[r_{A,t}] = h(x_{t-1}, I_{t-1})\\
    &\mathbb{V}_{t-1}[r_{A,t}] = \mathbb{E}_{t-1}\left[(r_{t-1}-\mathbb{E}_{t-1}[h(x_{t-2}, I_{t-2})])^2\right]
\end{align}
This means when AI traders obtain available information until time $t-1$, they estimate an expected return based on the model.
Moreover, the risk is assumed to be the prediction error (standard deviation) of the model available at time $t-1$: $\sqrt{\mathbb{V}_{t-1}[r_{A,t}]}$.
Thus, the expected utility of AI traders can be written as:
\begin{align}\label{U_AI_trader_original}
    \mathbb{E}_{t-1}[M(r_{A,t})] &= \mathbb{E}_{t-1}[r_{A,t}] - \gamma \sqrt{\mathbb{V}_{t-1}[r_{A,t}]} \\
    & = h(x_{t-1}, I_{t-1}) - \gamma \mathbb{E}_{t-1}\left|r_{t-1} - \mathbb{E}_{t-1}[h(x_{t-2}, I_{t-2})]\right|
\end{align}
where $\gamma (> 0)$ is a risk aversion coefficient.
The utility is based on the prediction value, penalized for the prediction deviation from the model $h$.

\subsection{Pricing Model}
Our model employs a simple price model based on the order imbalance, which is proposed in a previous work \cite{Mizuta2012}.

In the price model, the price return $r_t$ based on the order imbalance is calculated as:
\begin{align}\label{pricing}
    r_t = \rho \frac{A^b_t - A^s_t}{A^b_t + A^s_t}
\end{align}
where $A^b_t$ is the total buy orders, $A^s_t$ is the total sell orders, and $\rho$ is the constant coefficient of the order imbalance.

Moreover, $A^b_t,A^s_t$ are defined as:
\begin{align}
     A^b_t &= \frac{1}{2} S\left(1 +p_1 \mathbb{E}_{t-1}[U(r_{F,t})] +p_2 \mathbb{E}_{t-1}[M(r_{A,t})]\right) \nonumber\\
    &+ \frac{1}{2}k S(1+ p_1 \mathbb{E}_{t-1}[U(r_{F,t})]+ p_2 \mathbb{E}_{t-1}[M(r_{A,t})])\varepsilon_t \label{Abt}\\
     A^s_t &= \frac{1}{2} S\left(1 -p_1 \mathbb{E}_{t-1}[U(r_{F,t})] -p_2 \mathbb{E}_{t-1}[M(r_{A,t})]\right) \nonumber\\
    &- \frac{1}{2}k S(1+ p_1 \mathbb{E}_{t-1}[U(r_{F,t})]+ p_2 \mathbb{E}_{t-1}[M(r_{A,t})])\varepsilon_t \label{Ast}
\end{align}
where $S,k$ are constant variables representing the liquidity, $p_1$ and $p_2$ are the ratios of fundamental and AI traders over noise traders, respectively.
$p_1 + p_2$ indicates the ratio of investors who take liquidity from the market to those who provide liquidity (noise traders).
Additionally, $\mathbb{E}_{t-1}[U(r_{F,t})]$ and $\mathbb{E}_{t-1}[M(r_{A,t})]$ are the utility functions of fundamental and AI traders defined by equations \eqref{U_Funda_trader} and \eqref{U_AI_trader}, respectively.
The first term represents the constant total order quantities placed by noise, fundamental, and AI traders.
The second term represents the difference in buy and sell orders due to demand-supply imbalances.
The constant $k$ indicates the order quantities generated due to demand-supply imbalances relative to the constant total order quantities, which highly affects how much demand and supply imbalances take liquidity from the market.
A larger value of $k$ implies lower liquidity.

\section{Theoretical Analysis: The Microfoundations of GARCH}

In this section, we derive the microfoundations of GARCH based on the multi-agent market model explained in the previous section.
The final goal of this section is to derive the GARCH parameters ($f, \omega, \alpha, \beta$) in equation \eqref{GARCH} using the parameters defined in the previous section.

First, $\mathbb{E}_{t-1}[h(\cdot)] = f(\cdot)$ is reasonably assumed because the effort to find well-predictable model $h$ is the same as the finding the ground truth function $f$.
Thus, equation \eqref{U_AI_trader_original} can be re-formulated as:
\begin{align}\label{U_AI_trader}
    \mathbb{E}_{t-1}[M(r_{A,t})] &= h(x_{t-1}, I_{t-1}) - \gamma \mathbb{E}_{t-1}\left|r_{t-1} - \mathbb{E}_{t-1}[h(x_{t-2}, I_{t-2})]\right|\\
    & = h(x_{t-1}, I_{t-1}) - \gamma \left|r_{t-1} - f(x_{t-2}, I_{t-2})\right| \\
    & = h(x_{t-1}, I_{t-1}) - \gamma \left|u_{t-1}\right|
\end{align}

Since the total volume of buy and sell orders is given by the equations \eqref{Abt} and \eqref{Ast}, substituting them into the equation \eqref{pricing} in the pricing model as:
\begin{align}
r_t &= \rho (p_1 \mathbb{E}_{t-1}[U(r_{F,t})] + p_2 \mathbb{E}_{t-1}[M(r_{A,t})]) \nonumber\\
&+ \rho k \left\{1+p_1 \mathbb{E}_{t-1}[U(r_{F,t})] + p_2 \mathbb{E}_{t-1}[M(r_{A,t})] ]\right\}\varepsilon_t\\
&= \rho (p_1 (g(x_{t-1}) - \lambda \sigma_{t-1}) +p_2 ( h(x_{t-1}, I_{t-1}) - \gamma |u_{t-1}|)) \nonumber\\
&+ \rho k \{1+ p_1 (g(x_{t-1}) - \lambda \sigma_{t-1}) + p_2 ( h(x_{t-1}) - \gamma |u_{t-1}|))\}\varepsilon_t.
\end{align}
Therefore, $\mathbb{E}_{t-1}[r_t]$ and $\mathbb{V}_{t-1}[r_t]$ can be derived as:
\begin{align}
    \mathbb{E}_{t-1}[r_t] &= \rho (p_1 (g(x_{t-1}) - \lambda \sigma_{t-1}) +p_2 ( h(x_{t-1}) - \gamma |u_{t-1}|)), \\
    \mathbb{V}_{t-1}[r_t] &= \rho^2 k^2(1 + p_1^2 g(x_{t-1})^2 + p_2^2 h(x_{t-1})^2) + \rho^2 k^2 p_2^2 \gamma^2 u_{t-1}^2 + \rho^2 k^2 p_1^2 \lambda^2 \sigma_{t-1}^2.
\end{align}
Compared with equations \eqref{GARCH_EXP} and \eqref{GARCH_VAR} in the GARCH(1,1) model, we can prove the following theorem.

\begin{theorem}[Microfoundation of GARCH Model]\label{thm1}
The parameters of the equation \eqref{GARCH} are given by the microfoundation as:
\begin{align}
&\omega=\rho^2 k^2\{1 + p_1^2 g(x_{t-1})^2 + p_2^2 h(x_{t-1}, I_{t-1})^2\},\label{all_omega}\\
&f=\rho \{p_1 (g(x_{t-1}) - \lambda \sigma_{t-1}) + p_2 (h(x_{t-1}, I_{t-1}) - \gamma |u_{t-1}|)\}, \label{all_f}\\
&\alpha=\rho^2 k^2 p_2^2 \gamma^2, \quad \beta=\rho^2 k^2 p_1^2 \lambda^2 \label{all_ab}
\end{align}
\end{theorem}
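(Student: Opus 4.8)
The plan is to read off the GARCH(1,1) parameters by computing the first two conditional moments of $r_t$ from the pricing rule \eqref{pricing} and matching them against \eqref{GARCH_EXP} and \eqref{GARCH_VAR}. First I would substitute the aggregate order expressions \eqref{Abt} and \eqref{Ast} into \eqref{pricing}: the noise-trader and imbalance contributions cancel in the denominator, so $A^b_t + A^s_t = S$, while the numerator equals $S(p_1 \mathbb{E}_{t-1}[U(r_{F,t})] + p_2 \mathbb{E}_{t-1}[M(r_{A,t})]) + kS(1 + p_1 \mathbb{E}_{t-1}[U(r_{F,t})] + p_2 \mathbb{E}_{t-1}[M(r_{A,t})])\varepsilon_t$. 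Thus $r_t$ becomes an affine function of $\varepsilon_t$ whose coefficients are known at time $t-1$, into which I substitute $\mathbb{E}_{t-1}[U(r_{F,t})] = g(x_{t-1}) - \lambda \sigma_{t-1}$ from \eqref{U_Funda_trader} and $\mathbb{E}_{t-1}[M(r_{A,t})] = h(x_{t-1}, I_{t-1}) - \gamma |u_{t-1}|$ from \eqref{U_AI_trader}, the latter relying on the already-established reduction $\mathbb{E}_{t-1}[h(\cdot)] = f(\cdot)$.

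Next, since $\varepsilon_t \sim N(0,1)$ is independent of the conditioning information, applying $\mathbb{E}_{t-1}[\cdot]$ deletes the $\varepsilon_t$ term and gives $\mathbb{E}_{t-1}[r_t] = \rho(p_1 \mathbb{E}_{t-1}[U(r_{F,t})] + p_2 \mathbb{E}_{t-1}[M(r_{A,t})])$, and applying $\mathbb{V}_{t-1}[\cdot]$ gives $\mathbb{V}_{t-1}[r_t] = \rho^2 k^2 (1 + p_1 \mathbb{E}_{t-1}[U(r_{F,t})] + p_2 \mathbb{E}_{t-1}[M(r_{A,t})])^2$. Expanding this square in terms of $g$, $h$, $\sigma_{t-1}$ and $|u_{t-1}|$ produces a constant part, a part proportional to $u_{t-1}^2$ (the $p_2^2 \gamma^2 |u_{t-1}|^2$ term), a part proportional to $\sigma_{t-1}^2$ (the $p_1^2 \lambda^2 \sigma_{t-1}^2$ term), together with terms linear in $\sigma_{t-1}$ or $|u_{t-1}|$ and cross terms such as $p_1 p_2 g h$ and $p_1 p_2 \lambda \gamma \sigma_{t-1} |u_{t-1}|$.

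I would then match with GARCH(1,1): comparing $\mathbb{E}_{t-1}[r_t]$ with \eqref{GARCH_EXP} identifies $f$ and yields \eqref{all_f}; retaining in $\mathbb{V}_{t-1}[r_t]$ the three terms having the structure of the conditional-variance recursion \eqref{GARCH_VAR} and equating the constant term with $\omega$, the coefficient of $u_{t-1}^2$ with $\alpha$, and the coefficient of $\sigma_{t-1}^2$ with $\beta$ yields \eqref{all_omega} and \eqref{all_ab}. It is also immediate that $\omega > 0$, and that the stationarity condition $\alpha + \beta < 1$ translates into $\rho^2 k^2 (p_2^2 \gamma^2 + p_1^2 \lambda^2) < 1$.

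The step I expect to be the real obstacle is the truncation in the previous paragraph: the leftover linear and cross terms of the expanded square do not fit the $\omega + \alpha u_{t-1}^2 + \beta \sigma_{t-1}^2$ template, so the identification becomes exact only after those terms are discarded. I would justify this by working in the regime where the trader fractions $p_1$ and $p_2$ (orders taking liquidity relative to the noise-trader float) are small, so that the retained diagonal $p_1^2$ and $p_2^2$ contributions carry the GARCH structure while the dropped terms are higher-order corrections, and I would state this regime explicitly as a hypothesis rather than leave it implicit — this is precisely the point at which the microfoundation is an approximate rather than an exact derivation.
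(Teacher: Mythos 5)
Your route is the same as the paper's: substitute \eqref{Abt} and \eqref{Ast} into \eqref{pricing} (noting $A^b_t+A^s_t=S$), write $r_t$ as an affine function of $\varepsilon_t$ with coefficients built from $\mathbb{E}_{t-1}[U(r_{F,t})]=g(x_{t-1})-\lambda\sigma_{t-1}$ and $\mathbb{E}_{t-1}[M(r_{A,t})]=h(x_{t-1},I_{t-1})-\gamma|u_{t-1}|$, take conditional mean and variance, and match against \eqref{GARCH_EXP} and \eqref{GARCH_VAR}. The identification of $f$ is identical and unproblematic.

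The truncation you flag in your last paragraph is exactly where the paper is silent: it writes $\mathbb{V}_{t-1}[r_t]=\rho^2k^2(1+p_1^2g^2+p_2^2h^2)+\rho^2k^2p_2^2\gamma^2u_{t-1}^2+\rho^2k^2p_1^2\lambda^2\sigma_{t-1}^2$ directly, i.e.\ it keeps only the diagonal terms of $\bigl(1+p_1(g-\lambda\sigma_{t-1})+p_2(h-\gamma|u_{t-1}|)\bigr)^2$ and silently discards the linear and cross terms, with no stated hypothesis. So you have correctly reconstructed both the computation and its weak point. However, your proposed repair does not work as stated: in the regime $p_1,p_2\to 0$ the discarded terms $-2\rho^2k^2p_1\lambda\sigma_{t-1}$ and $-2\rho^2k^2p_2\gamma|u_{t-1}|$ are $O(p)$, while the retained coefficients $\alpha=\rho^2k^2p_2^2\gamma^2$ and $\beta=\rho^2k^2p_1^2\lambda^2$ are $O(p^2)$; the dropped terms dominate the kept ones, so smallness of $p_1,p_2$ cannot justify the truncation as a leading-order approximation. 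If you want an honest hypothesis, it has to be of a different kind (e.g.\ that the linear terms in $\sigma_{t-1}$ and $|u_{t-1}|$ and the cross term $\sigma_{t-1}|u_{t-1}|$ are absorbed or neglected by fiat, or that one reinterprets \eqref{GARCH_VAR} as a template matched only on the $u_{t-1}^2$ and $\sigma_{t-1}^2$ coefficients). Also note that the resulting $\omega$ in \eqref{all_omega} depends on $x_{t-1}$ through $g$ and $h$, so it is not a constant as GARCH requires --- another point the paper leaves unaddressed and that your write-up should not inherit uncritically.
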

Theorem \ref{thm1} provides a significant contribution to understanding the microfoundations of the GARCH(1,1) model by linking it to the behaviors and interactions of different types of traders in a financial market. Several key insights and implications can be drawn from this theorem:

In a market where noise, fundamental, and AI traders exist, there are 
\begin{description}
    \item[Constant volatility($\omega \neq 0$):] The parameter $\omega$ represents the constant component of volatility in the GARCH model. According to the theorem, $\omega$ is influenced by the aggregate behaviors of noise, fundamental, and AI traders. This implies that even in the absence of external shocks, the inherent trading activities of these agents contribute to a baseline level of market volatility. 
    \item[Incorporation of information($f \neq 0$):] The parameter $f$ in the theorem indicates that fundamental variables and available information up to time $t-1$ are incorporated into market returns. This aligns with the semi-strong form of the efficient market hypothesis, suggesting that market prices reflect all publicly available information.
    \item[Sensitivity to past shocks($\alpha \neq 0$):] The parameter $\alpha$ signifies the market's sensitivity to past price shocks. The theorem shows that $\alpha$ is directly related to the trading activities of AI traders, particularly their risk aversions. The existence of AI traders is also amplifying market responses to past events.
    \item[Volatility clustering($\beta \neq 0$):]The parameter $\beta$ represents the persistence of volatility over time. The theorem links $\beta$ to the dependence on past volatility, driven by fundamental traders' risk aversions. 
    The volatility clustering is a crucial characteristic of financial markets, and this theorem provides a microfoundational explanation for this phenomenon.
\end{description}

Below, we derive the results of the microfoundations of the GARCH model when we assume the non-existence of each trader based on Theorem \ref{thm1}.

\subsubsection{When only the noise traders exist in markets ($p_1=0,p_2=0$),} Theorem \ref{thm1} can be reformulated as:
\begin{align}
    &\omega=\rho^2 k^2, \label{n_omega} \\ 
    &f=\alpha=\beta=0 \label{n_fab}.
\end{align}
We can derive the following proposition:
\begin{proposition}[Microfoundation of the GARCH model in a market with only noise traders]
When only the noise traders exist in markets ($p_1=0,p_2=0$), there are 

\begin{description}
    \item[Constant volatility($\omega \neq 0$):] The volatility $\omega$ depends on the variance of the order imbalance ($\rho$) and liquidity ($k$).    
    \item[Exclusion of information($f = 0$):] 
    The parameter $f$ being zero indicates that information has no impact on market prices when only noise traders are present. This contrasts with markets where fundamental and AI traders are active, suggesting that the presence of noise traders alone does not incorporate information into asset pricing.
    \item[Absence of response to past shocks($\alpha = 0$):] The parameter $\alpha$ being zero signifies that the market does not exhibit sensitivity to past price shocks when only noise traders are present. 
    \item[No Volatility Clustering($\beta = 0$):]The parameter $\beta$ being zero indicates the absence of volatility clustering in a market with only noise traders because noise traders' actions do not create persistent volatility patterns. 
\end{description}
\end{proposition}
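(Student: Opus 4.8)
The plan is to obtain the proposition as an immediate specialization of Theorem~\ref{thm1}: I would substitute $p_1 = 0$ and $p_2 = 0$ into the closed-form expressions \eqref{all_omega}, \eqref{all_f}, and \eqref{all_ab}. Every term appearing in $f$, $\alpha$, and $\beta$ carries a factor of $p_1$ or $p_2$, so all three collapse to zero; in $\omega$ the terms $p_1^2 g(x_{t-1})^2$ and $p_2^2 h(x_{t-1}, I_{t-1})^2$ vanish and leave $\omega = \rho^2 k^2$. Since $\rho$ is the (nonzero) order-imbalance coefficient and $k>0$ is the liquidity constant, this is strictly positive, which establishes the first claim and at the same time pins down the dependence of $\omega$ on $\rho$ and $k$ only.

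For the interpretive statements I would feed these values back into the GARCH specification. With $\alpha = \beta = 0$, the conditional-variance recursion \eqref{GARCH_VAR} reduces to $\sigma_t^2 = \omega$, a constant independent of $u_{t-1}$ and $\sigma_{t-1}$; this is exactly the assertion that the market has no sensitivity to past shocks ($\alpha = 0$) and no persistence of volatility, i.e.\ no volatility clustering ($\beta = 0$). With $f = 0$, equation \eqref{GARCH_EXP} gives $\mathbb{E}_{t-1}[r_t] = 0$, and \eqref{GARCH} becomes $r_t = u_t = \sqrt{\omega}\,\varepsilon_t$ with $\varepsilon_t \sim N(0,1)$ i.i.d., so returns are serially independent and carry no information content, which is the ``exclusion of information'' claim.

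As a consistency check I would also re-derive the result directly from the pricing model rather than through Theorem~\ref{thm1}: setting $p_1 = p_2 = 0$ in \eqref{Abt}--\eqref{Ast} yields $A^b_t = \frac{1}{2} S(1 + k\varepsilon_t)$ and $A^s_t = \frac{1}{2} S(1 - k\varepsilon_t)$, so that \eqref{pricing} gives $r_t = \rho k \varepsilon_t$; taking the conditional mean and variance recovers $\mathbb{E}_{t-1}[r_t] = 0$ and $\mathbb{V}_{t-1}[r_t] = \rho^2 k^2$, matching $f = \alpha = \beta = 0$ and $\omega = \rho^2 k^2$. I do not anticipate a genuine obstacle; the only point requiring care is the nonvanishing of $\omega$, which relies on the standing assumptions $\rho \neq 0$ and $k > 0$ (equivalently, that noise traders generate a nonzero imbalance term), without which the ``constant volatility'' conclusion would be vacuous.
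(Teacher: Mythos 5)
Your proof is correct and follows the same route as the paper: the proposition is obtained by direct substitution of $p_1=p_2=0$ into the expressions \eqref{all_omega}--\eqref{all_ab} of Theorem~\ref{thm1}, yielding $\omega=\rho^2k^2$ and $f=\alpha=\beta=0$ exactly as in \eqref{n_omega}--\eqref{n_fab}. Your additional consistency check from the pricing model (giving $r_t=\rho k\varepsilon_t$ directly) is a correct and worthwhile verification, but it does not change the fact that the core argument coincides with the paper's.
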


\subsubsection{When the noise and fundamental traders exist in markets ($p_1 > 0,p_2=0$),} Theorem \ref{thm1} can be reformulated as:
\begin{align}
&\omega=\rho^2 k^2(1 + p_1^2 g(x_{t-1})^2),\label{nf_omega} \\
&f=\rho p_1 (g(x_{t-1}) - \lambda \sigma_{t-1}), \label{nf_f} \\
&\alpha=0,\beta=\rho^2 k^2 p_1^2 \lambda^2 \label{nf_ab}.
\end{align}
This leads to the following proposition:

\begin{proposition}[Microfoundation of the GARCH model in a market with the noise and fundamental traders]
When the noise and fundamental traders exist in markets ($p_1 > 0,p_2=0$), the following are satisfied:
\begin{description}
    \item[Constant volatility($\omega \neq 0$):] The parameter $\omega$ is influenced by the order imbalance ($\rho$) and liquidity ($k$), as well as the trading activities of fundamental traders.
    \item[Incorporation of Fundamental Information($f \neq 0$):]
    The presence of fundamental traders ensures that fundamental variables are reflected in market prices. This is crucial for market efficiency.
    \item[No Response to Past Shocks($\alpha=0$):]
    The absence of AI traders means that the market does not exhibit sensitivity to past price shocks. 
    \item[Volatility Clustering($\beta \neq 0$):]
    The parameter $\beta$ being non-zero indicates that volatility clustering occurs in markets with noise and fundamental traders. The presence of fundamental traders, who react to fundamental variables, contributes to this persistent volatility pattern.
\end{description}    
\end{proposition}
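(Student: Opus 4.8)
The proposition is the $p_2 = 0$ specialization of Theorem~\ref{thm1}, so the plan is to obtain it as a direct corollary and then verify the sign conditions that promote the bare formulas to the four qualitative statements.

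First I would set $p_2 = 0$ in the order-flow equations \eqref{Abt} and \eqref{Ast}. The AI-trader contribution $p_2\,\mathbb{E}_{t-1}[M(r_{A,t})]$ then vanishes from both the deterministic part and the $\varepsilon_t$-part of $A^b_t$ and $A^s_t$, hence never enters the order imbalance \eqref{pricing}; inserting $\mathbb{E}_{t-1}[U(r_{F,t})] = g(x_{t-1}) - \lambda\sigma_{t-1}$ from \eqref{U_Funda_trader} leaves
\begin{align}
r_t = \rho p_1\bigl(g(x_{t-1}) - \lambda\sigma_{t-1}\bigr) + \rho k\bigl\{1 + p_1(g(x_{t-1}) - \lambda\sigma_{t-1})\bigr\}\varepsilon_t .
\end{align}
Taking $\mathbb{E}_{t-1}[\cdot]$ and $\mathbb{V}_{t-1}[\cdot]$ of this (using $\mathbb{E}_{t-1}[\varepsilon_t] = 0$, $\mathbb{E}_{t-1}[\varepsilon_t^2] = 1$) and matching against \eqref{GARCH_EXP}--\eqref{GARCH_VAR} reproduces \eqref{nf_omega}, \eqref{nf_f}, \eqref{nf_ab}; equivalently one simply puts $p_2 = 0$ in \eqref{all_omega}--\eqref{all_ab} of Theorem~\ref{thm1}.

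It then remains to read off the four claims. Because $\rho \neq 0$ and $k \neq 0$, $\omega = \rho^2 k^2(1 + p_1^2 g(x_{t-1})^2) \geq \rho^2 k^2 > 0$; with $p_1 > 0$ and $\lambda > 0$, $\beta = \rho^2 k^2 p_1^2 \lambda^2 > 0$; and $\alpha = 0$ holds identically since no $u_{t-1}^2$ term survives once $p_2 = 0$. For $f$, the quantity $\rho p_1(g(x_{t-1}) - \lambda\sigma_{t-1})$ is nonzero exactly when $g(x_{t-1}) \neq \lambda\sigma_{t-1}$, i.e. generically, so I would phrase the ``incorporation of fundamental information'' item as holding away from the degenerate locus $g(x_{t-1}) = \lambda\sigma_{t-1}$ rather than unconditionally. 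The textual items (constant volatility, incorporation of fundamental information, no response to past shocks, volatility clustering) are then just the economic readings of these formulas, exactly parallel to the discussion following Theorem~\ref{thm1}.

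There is no genuine analytic difficulty here; the only steps that deserve a line of justification are (i) confirming that dropping the AI-trader term \emph{before} aggregation is legitimate --- that is, that $\mathbb{E}_{t-1}[r_t]$ and $\mathbb{V}_{t-1}[r_t]$ computed from the reduced $r_t$ agree with the $p_2 \to 0$ limit of the general expressions --- and (ii) observing that placing the result inside the admissible GARCH(1,1) region additionally requires $\alpha + \beta = \rho^2 k^2 p_1^2 \lambda^2 < 1$, which is the stationarity constraint stated after \eqref{GARCH} and is an external hypothesis rather than something the proposition itself establishes.
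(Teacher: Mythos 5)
Your proposal matches the paper's own derivation: the paper obtains this proposition by simply setting $p_2 = 0$ in the formulas of Theorem~\ref{thm1} to get \eqref{nf_omega}--\eqref{nf_ab} and then reading off the four qualitative claims. Your additional remarks --- that $f \neq 0$ only holds away from the degenerate locus $g(x_{t-1}) = \lambda\sigma_{t-1}$, and that stationarity requires $\rho^2 k^2 p_1^2 \lambda^2 < 1$ as an external hypothesis --- are careful refinements the paper does not state, but the core argument is the same.
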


\subsubsection{When the noise and AI traders exist in markets ($p_1 = 0,p_2 > 0$),} theorem \ref{thm1} can be reformulated as:
\begin{align}
&\omega=\rho^2 k^2(1 + p_2^2 h(x_{t-1}, I_{t-1})^2),\label{na_omega}\\ 
&f=\rho p_2 (h(x_{t-1}, I_{t-1}) - \gamma |u_{t-1}|), \label{na_f}\\
&\alpha=\rho^2 k^2 p_2^2 \gamma^2,\beta=0 \label{na_ab}
\end{align}
We can derive the following proposition:
\begin{proposition}[Microfoundation of the GARCH model in a market with the noise and AI traders]
When the noise and AI traders exist in markets ($p_1 = 0,p_2>0$), the following are satisfied:
\begin{description}
\item[Constant volatility ($\omega \neq = 0$):] The parameter $\omega$ is influenced by the order imbalance ($\rho$) and liquidity ($k$), as well as the presence of AI traders.
\item[Incorporation of Information ($f \neq 0$):] The parameter $f$ being non-zero indicates that the information processed by AI traders is incorporated into market prices. This suggests that the trading strategies of AI traders ensure that their predictive information is reflected in asset pricing.
\item[Response to Past Shocks ($\alpha \neq 0$):] The parameter $\alpha$ being non-zero signifies that the market exhibits sensitivity to past price shocks due to the presence of AI traders. This shows the role of AI traders in amplifying market responses to historical price movements.
\item[No Volatility Clustering ($\beta=0$):] The parameter 
$\beta$ being zero indicates the absence of volatility clustering in a market with only noise and AI traders because their actions do not create persistent volatility patterns.
\end{description}
\end{proposition}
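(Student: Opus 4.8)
The plan is to derive the Proposition as an immediate specialization of Theorem~\ref{thm1} to the regime $p_1 = 0$, $p_2 > 0$, followed by a few elementary non-vanishing checks. First I would substitute $p_1 = 0$ into the identities \eqref{all_omega}, \eqref{all_f}, and \eqref{all_ab}: every term carrying a factor $p_1$ or $p_1^2$ disappears, which reproduces \eqref{na_omega}, \eqref{na_f}, and \eqref{na_ab}. In particular $\beta = \rho^2 k^2 p_1^2 \lambda^2 = 0$, establishing the \emph{No Volatility Clustering} statement with no further work.

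For \emph{Constant volatility} I would note $\omega = \rho^2 k^2\bigl(1 + p_2^2 h(x_{t-1}, I_{t-1})^2\bigr) \ge \rho^2 k^2 > 0$, using that the order-imbalance coefficient $\rho$ and the liquidity constant $k$ are nonzero and that $p_2^2 h(\cdot)^2 \ge 0$ keeps the bracket bounded below by the noise-trader contribution $1$. The \emph{Response to Past Shocks} statement is the same kind of observation: $\alpha = \rho^2 k^2 p_2^2 \gamma^2 > 0$ because every factor is nonzero, now using $p_2 > 0$ and $\gamma > 0$.

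The one point that is not a literal identity is \emph{Incorporation of Information}: $f = \rho p_2\bigl(h(x_{t-1}, I_{t-1}) - \gamma|u_{t-1}|\bigr)$ is zero exactly when the AI trader's conditional forecast $h(x_{t-1}, I_{t-1})$ happens to coincide with $\gamma|u_{t-1}|$. Since $\rho p_2 \neq 0$, I would argue that $f$ is not identically zero and in fact depends nontrivially on $x_{t-1}$ and $I_{t-1}$ through $h$, in contrast with the pure-noise-trader case where $f \equiv 0$; the cancellation $h(x_{t-1}, I_{t-1}) = \gamma|u_{t-1}|$ is non-generic. I expect this to be the only delicate step, and mainly a matter of phrasing: the claim ``$f \neq 0$'' should be stated as ``$f$ carries the AI signal and is not identically zero'' rather than as a pointwise strict inequality. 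Everything else is substitution into Theorem~\ref{thm1} and reading off signs from $\rho, k, p_2, \gamma \neq 0$.
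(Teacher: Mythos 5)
Your proposal matches the paper's own derivation: the paper obtains this proposition exactly by substituting $p_1 = 0$ into the identities of Theorem~\ref{thm1}, which yields equations \eqref{na_omega}--\eqref{na_ab}, and then reads off the sign conditions from $\rho, k, p_2, \gamma \neq 0$. Your additional caveat about $f$ --- that $f \neq 0$ should be read as ``not identically zero'' since the pointwise cancellation $h(x_{t-1}, I_{t-1}) = \gamma|u_{t-1}|$ cannot be excluded --- is a small refinement the paper glosses over, but it does not change the route.
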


\section{Simulation Experiment}
\subsection{Experimental Settings}
In the previous section, we discuss the mathematical structure without a specific definition for $g$ and $h$ for fundamental and AI traders.
In this section, we temporarily define those functions and other parameters and check how the numerical simulation works well by checking stylized facts, which are well-known phenomena in financial markets.

For the parameters, we employed $\rho=4$, $k=0.4$, $p_1=0.2$, $p_2=0.4$, $\lambda =1.2$, $\gamma=1.2$.
Those parameters are empirically decided, and $p_1$ and $p_2$ are set to satisfy the situation that enough liquidity traders exist in the simulation.
The fundamental variable $x_t$ was generated by a random variable following a standard normal distribution.
The log utility function is employed for $g$:
\begin{align}
	g(x_{t-1})=\log(1 + \max(-0.99,x_{t-1})).
\end{align}
For $h$, the auto-regressive~(AR) process is assumed:
\begin{align}
	h(x_{t-1}, I_{t-1})= 0.1 x_{t-1}.
\end{align}

We simulate the return series $\hat{r}_t$ based on equation \eqref{pricing}.
We evaluate the validity of the multi-agent market model based on whether it can reproduce stylized facts \cite{chen2012agent}.

We evaluate several stylized facts using $\hat{r}_t$: negative skewness, higher kurtosis than the normal distribution~(leptokurtic), non-normality~(Kolmogorov-Smirnov; KS test), and positive autocorrelation of the square of returns~(volatility clustering). 
The generated time series length is denoted by $\hat{T}=1000$.

The skewness is defined as:
\begin{align}
    \text{Skewness} = \frac{\frac{1}{\hat{T}} \sum_{t=1}^{\hat{T}} (\hat{r}_t - \bar{r})^3}{\left(\frac{1}{\hat{T}} \sum_{t=1}^{\hat{T}} (\hat{r}_t - \bar{r})^2\right)^{3/2}}
\end{align}
where $\bar{r}$ is the mean of the return series. 
A negative skewness value indicates a distribution with a long tail on the left side~(the loss side), suggesting that small losses occur frequently along with occasional large losses.

The kurtosis is defined as:
\begin{align}
    \text{Kurtosis} = \frac{\frac{1}{\hat{T}} \sum_{t=1}^{\hat{T}} (\hat{r}_t - \bar{r})^4}{\left(\frac{1}{\hat{T}} \sum_{t=1}^{\hat{T}} (\hat{r}_t - \bar{r})^2\right)^2}
\end{align}
A high kurtosis value means that the distribution of returns is sharper than a normal distribution, indicating that values near the center~(mean) are concentrated, and extreme values~(very high returns or very large losses) occur more frequently than expected.

The KS test evaluates whether the return series follows a normal distribution. The KS test statistic is defined as:
\begin{align}
    \text{KS statistics} = \sup_x | F_{\hat{r}_t}(x) - F_{\text{normal}}(x) |
\end{align}
where $F_{\hat{r}_t}(x)$ is the empirical cumulative distribution function~(CDF) of $\hat{r}_t$ and $F_{\text{normal}}(x)$ is the CDF of the normal distribution with the same mean and variance as $\hat{r}_t$. 
The KS test would reject the null hypothesis if the returns exhibit non-normal characteristics.

The $\tau$-order autocorrelation of the square of returns is defined as:
\begin{align}
    \text{Autocorrelation}(\tau) = \frac{\sum_{t=\tau+1}^{\hat{T}} (\hat{r}_t^2 - \bar{r}^2)(\hat{r}_{t-\tau}^2 - \bar{r}^2)}{\sum_{t=1}^{\hat{T}} (\hat{r}_t^2 - \bar{r}^2)^2}
\end{align}
for lag $tau$. 
This measure indicates the presence of volatility clustering, where large price changes are likely to be followed by other large price changes.
The 1st-order autocorrelation is defined by $\tau = 1$.

These stylized facts are derived from previous works, such as \cite{Cont2001}, and provide a comprehensive evaluation of the multi-agent market model's ability to reproduce real market behaviors.

\subsection{Simulation Results}
\begin{table}[]
    \centering
    \caption{Statistics on generated return series. $^{*}$p$<$0.1; $^{**}$p$<$0.05; $^{***}$p$<$0.01.}
    \label{stat}
    \begin{tabular}{ll}
    \hline
        Skewness ($<0$) &  -1.880*** \\ 
        Kurtosis ($>3$) &  9.230*** \\ 
        KS statistics & 0.788*** \\ 
        1st-order autocorrelation ($>0$) &  0.367*** \\ \hline
    \end{tabular}
\end{table}

Table \ref{stat} summarizes the statistics for the generated return series, with significance levels indicated by asterisks. 

The results from Table \ref{stat} demonstrate that the multi-agent market model successfully generates return series that exhibit key stylized facts observed in real financial markets. 
The significant negative skewness and high kurtosis indicate that the generated returns can capture the asymmetric and heavy-tailed nature of financial returns. The significant KS statistic further confirms that the generated returns do not follow a normal distribution, aligning with empirical observations in financial data.
The positive autocorrelation of the square of returns suggests that the generated returns effectively replicate volatility clustering.

The sample paths of the generated return series are shown in Figure \ref{timeseries}. The histograms of returns are also shown in Figure \ref{hist}. These visual representations further support the statistical findings summarized in Table \ref{stat}. The distribution of returns exhibits a non-normal, heavy-tailed property, with noticeable skewness to the left, indicating frequent small losses and occasional large losses. The return series demonstrates periods of high volatility interspersed with periods of relative calm, reflecting the volatility clustering.

\begin{figure}[htbp]
 \centering
 \begin{subfigure}[b]{0.58\textwidth}
  \centering
  \includegraphics[height=4cm]{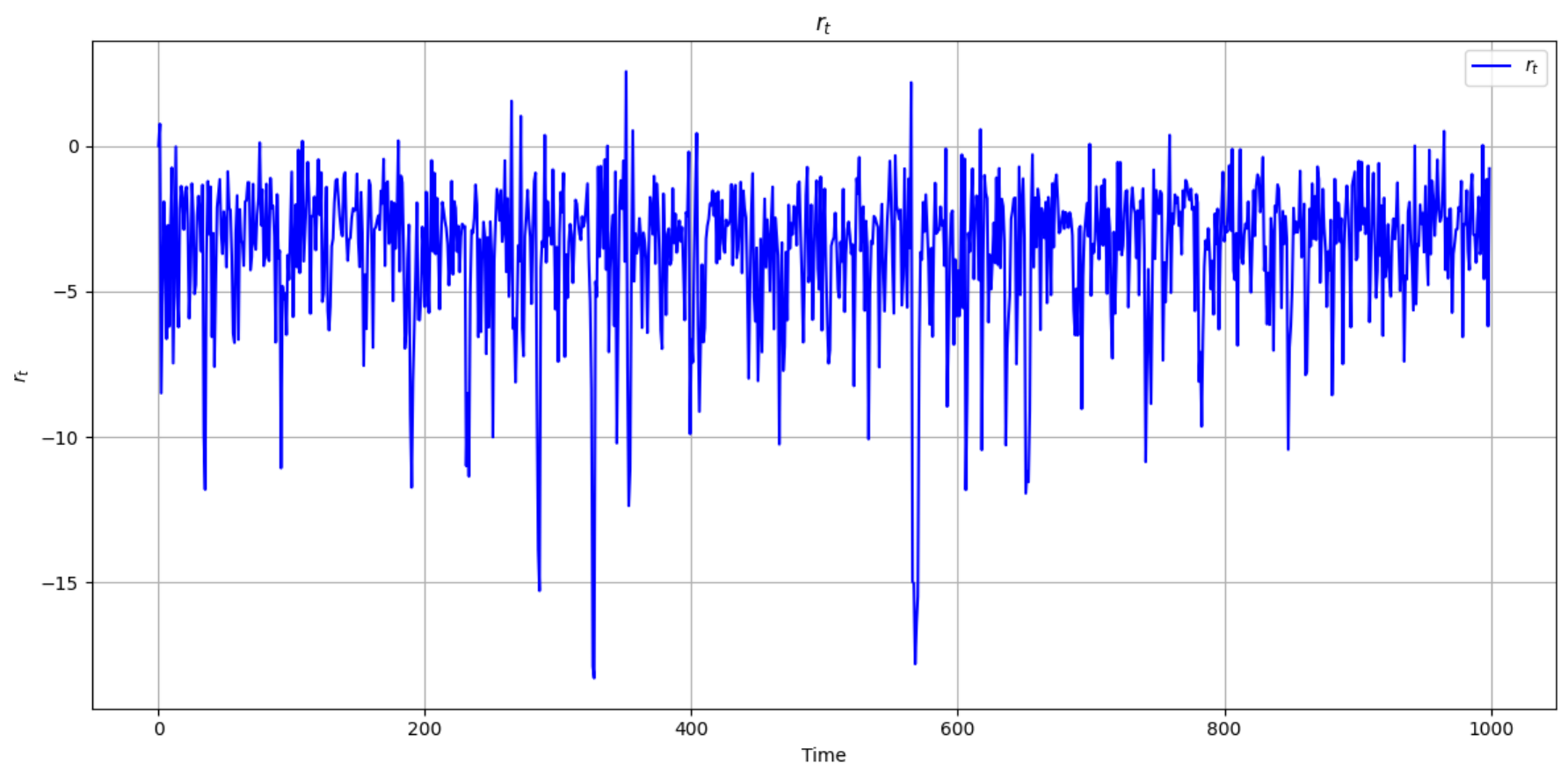}
  \caption{Generated return time-series}
  \label{timeseries}
 \end{subfigure}
 \hfill
 \begin{subfigure}[b]{0.38\textwidth}
  \centering
  \includegraphics[height=4cm]{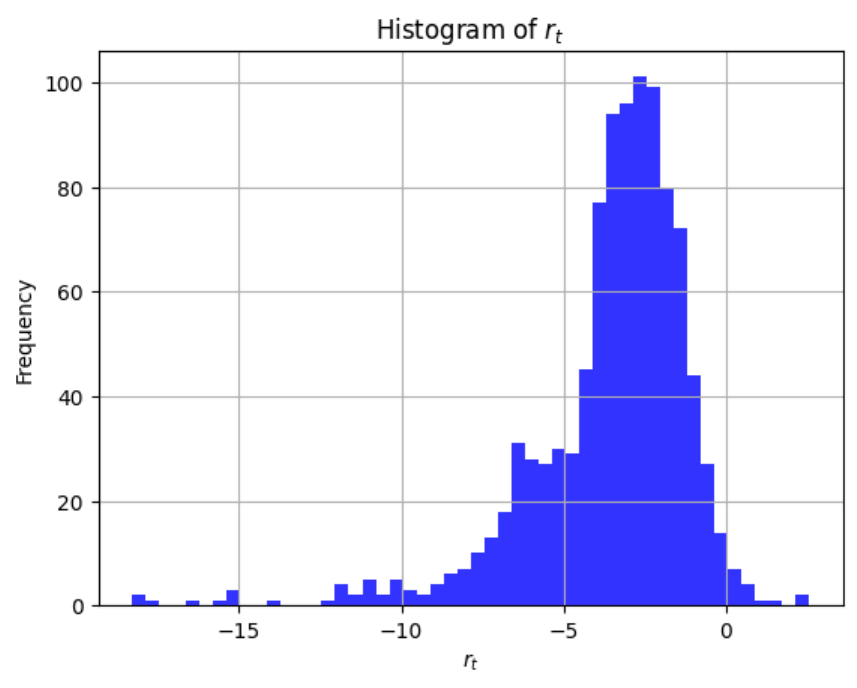}
  \caption{The return distribution of the generated return series}
  \label{hist}
 \end{subfigure}
 \caption{Comparison of generated return time-series and their distribution}
 \label{fig:comparison}
\end{figure}

\section{Discussion}
\begin{table}[h]
\centering
\caption{Comparison of Market with Different Trader Types}
\begin{tabularx}{\textwidth}{>{\raggedright\arraybackslash}X>{\centering\arraybackslash}X>{\centering\arraybackslash}X>{\centering\arraybackslash}X>{\centering\arraybackslash}X}
\hline
\textbf{Market Type} & \textbf{Baseline Volatility ($\omega$)} & \textbf{Incorporation of Information ($f$)} & \textbf{Response to Past Shocks ($\alpha$)} & \textbf{Volatility Clustering ($\beta$)} \\ \hline
Only Noise Traders (\(p_1 = 0, p_2 = 0\)) & Eq. $\eqref{n_omega}$ & - & - & - \\ \hline
Noise and Fundamental Traders (\(p_1 > 0, p_2 = 0\)) & Eq. $\eqref{nf_omega}$ & Eq. $\eqref{nf_f}$ & - & Eq. $\eqref{nf_ab}$  \\ \hline
Noise and AI Traders (\(p_1 = 0, p_2 > 0\)) & Eq. $\eqref{na_omega}$ & Eq. $\eqref{na_f}$ & - & Eq. $\eqref{na_ab}$\\ \hline
Noise, Fundamental, and AI Traders (\(p_1 > 0, p_2 > 0\)) & Eq. $\eqref{all_omega}$ & Eq. $\eqref{all_f}$ & Eq. $\eqref{all_ab}$ & Eq. $\eqref{all_ab}$ \\ \hline
\end{tabularx}
\label{tbl:market_type}
\end{table}

Firstly, the new multi-agent market model proposed in this study is consistent with some previous studies.
Our model covers a wide range of models, including some previous works, because the functions of each trader are not defined specifically, especially in AI traders.
For example, Chiarella {\it et al.} \cite{CI2002} proposed a model based on fundamental trends and noise traders.
Because AI traders' definition can cover the trend predicting traders, it also can be covered by our theory.
Therefore, our microfoundations could be applicable to a wide range of models.

Second, this study newly achieves the microfoundation of the GARCH model, and our proposed model has been proven valid through theoretical analysis and multi-agent simulations. Previously, the GARCH model was purely empirical, lacking theoretical validation. However, through our microfoundation, we have provided a theoretical basis for the GARCH model. This represents significant progress in the field of financial markets, as the GARCH model is widely used in various applications.

Third, our microfoundation revealed many insights into the previously unexplained parameters in the GARCH model. Through our microfoundations, these parameters can now be explained using parameters related to micromechanisms in financial markets. This provides a clearer understanding of market dynamics.

According to our results, if more fundamental traders exist in a market, volatility clustering appears much more.
If the ratio of fundamental traders $p_1$ increased, $\beta$ in the GARCH model would increase according to theorem \ref{thm1}.
Originally, $\beta$ is the autocorrelation coefficient for volatility according to equation \ref{GARCH}.
Therefore, a high ratio of fundamental traders $p_1$ means that the strength of the volatility clustering becomes high.

Moreover, when more AI traders exist in a market, a more significant response to the shock would happen, and overshooting tends to happen more.
If the ratio of AI traders $p_2$ increased, $\alpha$ in the GARCH model would increase according to theorem \ref{thm1}.
Originally, $\alpha$ is the coefficient for the square of diffusion term $u_{t-1}^2$ in the equation \ref{GARCH}.
Therefore, if there exist more AI traders in markets, the response to the exogenous diffusion, i.e., market shocks, becomes bigger, and the market turbulence becomes bigger than the expected market shocks.

Table \ref{tbl:market_type} summarizes our results derived from our microfoundations regarding the different trader types in the market.

\section{Conclusion}
In this study, we proposed a novel multi-agent market model incorporating fundamental traders, noise traders, and AI traders. Our model, rooted in market micro-structure, reveals new microfoundations of the GARCH(1,1) model. Through both mathematical analysis and extensive simulation studies, we demonstrated that our multi-agent model can derive the GARCH formula and replicate key stylized facts observed in financial markets, such as volatility clustering and fat tails.

Our findings highlight the ability of the new microfoundations to describe market structures and the dynamics resulting from different proportions of trader types, including AI traders and their decision-making. Our model offers insights into how various trader interactions contribute to market behavior and volatility.

Future research directions include extending this framework to multivariate GARCH models\cite{bauwens2006multivariate} to capture more complex market dynamics. 
Moreover, this study employs a very simple and basic pricing model based on order imbalance as equation \ref{pricing}.
Therefore, we should also consider the other pricing models such as the double auction market mechanism as the same as the actual markets.
Finally, estimating the actual proportions of different types of traders in real-world markets remains a critical area for further investigation. 
These efforts will enhance our understanding of the impact of AI traders and other market participants, ultimately contributing to more stable and efficient financial markets.

\appendix
\section{Proof of lemma \ref{ref:lemma}}
\begin{proof}
Refer to \cite{tobin1965theory}. Since $\phi(\epsilon)$ does not depend on $\mu$ and $\sigma$, the expected utility can be expressed as a function of $ \mu $ and $ \sigma $:
\begin{align}
    \mathbb{E}[U] = \mathbb{E}_{\mu, \sigma}[U] = \int_{-\infty}^{\infty} U(\mu + \sigma \epsilon) \phi(\epsilon) d \epsilon
\end{align}
By differentiating this with respect to $ \sigma $ and showing that the derivative is non-positive:
\begin{align}
    \frac{\partial \mathbb{E}_{\mu, \sigma}[U]}{\partial \sigma} = \int_{-\infty}^{\infty} U'(\mu + \sigma \epsilon) \epsilon \phi(\epsilon) d \epsilon \leq 0
    \label{eq:sigma_deriv}
\end{align}
To prove this, let $ U'(\mu) = c $. Since $ U $ is risk-averse, $ U' $ is a decreasing function. Therefore, in the region where $ \epsilon < 0 $, $ c < U'(\mu + \sigma \epsilon) $, and in the region where $ \epsilon \geq 0 $, $ c \geq U'(\mu + \sigma \epsilon) $. Thus, the integral in \eqref{eq:sigma_deriv} is not greater than $ \int c \epsilon \phi(\epsilon) d \epsilon = 0 $, proving the lemma.
\end{proof}

\end{document}